\newtheorem{theorem}{Theorem}
\newtheorem{corollary}[theorem]{Corollary}
\newtheorem{definition}[theorem]{Definition}
\newtheorem{lemma}[theorem]{Lemma}
\newcommand{\equals}{\stackrel{\mathrm{def}}{=}}
\newcommand{\lcm}{\mathrm{lcm}}
\title{Gang FTP scheduling of periodic \\and parallel rigid real-time tasks}
\author{Jo\a"el Goossens\hfill Vandy Berten\\
Universit\a'e libre de Bruxelles (U.L.B.)\\
CP212, 50 av. F.D. Roosevelt\\
1050 Brussels, Belgium\\
\{joel.goossens,vandy.berten\}@ulb.ac.be}
\begin{document}
\maketitle
\thispagestyle{empty}

\begin{abstract}
In this paper we consider the scheduling of periodic and parallel rigid tasks. We provide (and prove correct) an \emph{exact} schedulability test for Fixed Task Priority (FTP) Gang scheduler sub-classes: Parallelism Monotonic, Idling, Limited Gang, and Limited Slack Reclaiming. Additionally, we study the predictability of our schedulers: we show that Gang FJP schedulers are not predictable and we identify several sub-classes which are actually predictable. Moreover, we extend the definition of rigid, moldable and malleable jobs to recurrent tasks.  
\end{abstract}

\section{Introduction}\label{sec:intro}
We consider the preemptive scheduling of real-time tasks on identical multiprocessor platforms (see~\cite{baker2,Baker2005An-analysis-of-}). We deal with \emph{parallel} real-time tasks, the case where each job may be executed on different processors \emph{simultaneously}, i.e., we allow \emph{job parallelism}. Nowadays, the design of parallel programs is common thanks to parallel programming paradigms like Message Passing Interface (MPI~\cite{Gorlatch1998A-Generic-MPI-I,Lusk1999Using-MPI-:-por}) or Parallel Virtual Machine (PVM~\cite{Sunderam1990PVM:-A-Framewor,Geist1994PVM:-Parallel-V}). Even better, sequential programs can be parallelized using tools like OpenMP (see~\cite{355074} for details).
\paragraph{Related Work.}
Few models and results in the literature concern hard real-time and parallel tasks. \mbox{Manimaran} et al.\@ in~\cite{mani98} consider the \emph{non-preemptive} EDF scheduling of periodic tasks. \mbox{Han} et al.\@ in~\cite{han89} considered the scheduling of a
(finite) set of real-time jobs allowing job parallelism while we consider the scheduling of either infinite set of jobs or equivalently a set a periodic tasks. In previous work we contributed mainly to the \emph{feasibility} problem of parallel tasks. In~\cite{CCG06b} we provided a task model which integrates job parallelism. We proved that the time-complexity of the feasibility problem of these systems is linear relatively to the number of (sporadic) tasks. More recently, we considered the scheduling of jobs which are composed of phases to be executed sequentially ; in~\cite{Berten2009Feasibility-Tes} we provided a \emph{necessary} feasibility test. Regarding the \emph{schedulability} of recurrent real-time tasks, and to the best of our knowledge, we can only report the S.~Kato et al.\@ work (see~\cite{kato2009gang} for details) which considers the Gang scheduling of rigid tasks (the number of processors is fixed beforehand) and provides a \emph{sufficient} schedulability condition for Gang EDF scheduling.
\paragraph{This Research.}
In this paper we study the scheduling of recurrent and parallel rigid tasks. Our main contribution is an \emph{exact} schedulability test for Fixed Task Priority (FTP) Gang schedulers. Additionally, we study the predictability of our schedulers: we show that Gang FJP schedulers are not predictable and we identify several sub-classes which are actually predictable. Our technique is based on previous work for the scheduling of periodic tasks upon multiprocessors~\cite{dateLCJG07,etfaLCJG06}. To summarize the technique, we characterized for FTP schedulers and for asynchronous periodic task models, \emph{upper bounds} of the first time-instant where the schedule repeats. Based on the upper bounds and the \emph{predictability} property, we provide \emph{exact} schedulability tests for asynchronous constrained deadline periodic task sets. The predictability property is important in the technique and will be revisited in this work.
We also extend the definition of rigid, moldable and malleable jobs to \emph{recurrent} tasks.   

\paragraph{Paper Organization.}
This paper is organized as follows. Section~\ref{sec:model} introduces definitions, the model of computation and our assumptions. In Section~\ref{sec:predict} we study the predictability of our system, in particular we show that Gang FJP schedulers are not predictable and we identify several sub-classes which are actually predictable. We prove the periodicity of feasible schedules of periodic systems in Section~\ref{sec:periodicity}. In Section~\ref{sec:exact} we combine the periodicity and predictability properties, to provide, for our Gang FTP sub-classes, an \emph{exact} schedulability test. Lastly, we conclude in Section~\ref{sec:conclusion}.

\section{Model and Definitions}\label{sec:model}
\subsection{Parallel Terminology}
The parallel literature~\cite{Drozdowski2005Scheduling-Para, Buyya99, Feitelson96towardconvergence} defines several kind of parallel \emph{tasks}. But \emph{tasks} in the non real-time parallel terminology does not have the same meaning as \emph{tasks} in real-time scheduling literature. Actually, \emph{tasks} in the parallel literature corresponds to \emph{jobs} in our real-time community (i.e., corresponds to task \emph{instance}). Especially, the notion of rigid recurrent task is not defined and does not extend trivially from the non real-time literature, in this section we fill the gap.

\begin{definition}[Rigid, Moldable and Malleable Job]
A \emph{job} is said to be:
\begin{description}
\item[Rigid] if the number of processors assigned to this job is specified externally to the scheduler a priori, and does not change throughout its execution;
\item[Moldable] if the number of processors assigned to this job is determined by the scheduler, and does not change throughout its execution;
\item[Malleable] if the number of processors assigned to this job can be changed by the scheduler during the job's execution.
\end{description}
\end{definition}

\begin{definition}[Rigid, Moldable and Malleable Recurrent Task]\label{def:taskparallel}
A periodic/sporadic \emph{task} is said to be:
\begin{description}
\item [Rigid] if all its jobs are rigid, and the number of processors assigned to the jobs is specified externally to the scheduler;
\item [Moldable] if all its jobs are moldable;
\item [Malleable] if all its jobs are malleable.
\end{description}
\end{definition}

Notice that a rigid task does not necessarily have jobs with the same size. For instance, if the user/application decides that odd instances require $v$ processors, and even instances $v'$ processors, the task is said to be rigid.

\subsection{Task and Job Model}
We consider the preemptive scheduling of parallel jobs on a multiprocessor platform with $m$ processors. We will focus on the problem of scheduling a set of parallel jobs, each job $J_j \equals (r_{j}, v_{j}, e_{j}, d_{j})$ is characterized by a release time $r_j$, $v_{j}$ a required number of processors, an execution requirement $e_j$ and an absolute  deadline $d_j$. The job $J_j$ must execute for $e_j$ time units over the interval $[r_j,d_j)$ on $v_{j}$ processors. We consider the scheduling of \emph{rigid} tasks since $v_{i}$ is fixed externally to the scheduler. Actually, S.~Kato et al.\@ in~\cite{kato2009gang} have the same assumption as we do, and, given the Definition~\ref{def:taskparallel}, they consider \emph{rigid} tasks, and not \emph{moldable} tasks, as said in their paper --- otherwise the scheduler would determine $v_{i}$ on-line and at job-level. 

As we will consider periodic systems, let $\tau = \{\tau_{1}, \ldots, \tau_{n}\}$ denote a set of $n$ periodic parallel tasks. Each task $\tau_{i} = (O_{i}, v_{i}, C_{i}, D_{i}, T_{i},)$ will generate an infinite number of jobs, where the $k^\text{th}$ job of task $\tau_i$ is 
$$(O_i+ (k-1)T_i, v_i, C_i, O_i+ (k-1)T_i+D_i).$$

The execution requirement of a job of $\tau_{i}$ corresponds as a $C_{i} \times v_{i}$ \emph{rectangle}.
In this document we assume $D_{i} \leq T_{i}$ for any $\tau_{i}$, i.e., we consider constrained deadline systems. We consider multiprocessor platforms $\pi$ composed of $m$ identical processors: $\{\pi_1, \pi_2, \ldots, \pi_m \}$.

\subsection{Priority Assignment and Schedulers}

In this document we consider FTP and FJP schedulers with the following definitions.

\begin{definition}[FTP]
  A priority assignment  is a \emph{Fixed Task Priority}  assignment if it assigns the priorities to the tasks beforehand; at  run-time each job priority corresponds to its task priority. (An FTP scheduler uses an FTP priority assignment.)
\end{definition}

We assume that tasks are indexed according to priority (lower the index, higher the priority). 

\begin{definition}[FJP]
A priority assignment is a \emph{Fixed Job Priority}  assignment if and only if it satisfies the condition that: for every pair of jobs $J_i$ and $J_j$, if $J_i$ has higher priority than $J_j$ at some time-instant, then $J_i$ always has higher priority than $J_j$. (An FJP scheduler uses an FJP priority assignment.) 
\end{definition}

Remark that any FTP assignment is also FJP.


\begin{definition}[Gang FJP] At each instant, the algorithm schedules jobs on processors as follows: the highest priority (active) job $J_i$ is scheduled on the first $v_{i}$ available processors (if any). The very same rule is then applied to the remaining active jobs on the remaining available processors.
\end{definition}
\paragraph{Priority inversion.} Figure~\ref{fig:gangftp} illustrates the Gang FTP schedule ($\tau_{1} $ is the highest priority task and $\tau_{3}$ the lowest one) of $\tau_{1} = (0,2,2,5,5), \tau_{2} = (0,2,3,5,5), \tau_{3} = (0,1,4,5,5)$. Notice that Gang FJP and FTP can produce schedules where a lower-priority job ($J_{j}$) is scheduled while an active higher-priority job ($J_{i}$) is not (typically if $v_{i}>v_{j}$ --- which occurs at time $0$ in our example, $\tau_{2}$ and $\tau_{3}$ are active, $\tau_{3}$ is executing in $[0,2)$ while $\tau_{2}$ is not). This phenomenon, called \emph{priority inversion} in this document, could be a drawback as we will see. Fortunately, we keep an important FTP-property: the scheduling of the sub-set $\{\tau_{1}, \ldots, \tau_{i}\}$ is not jeopardized by lower-priority tasks ($\{\tau_{i+1}, \ldots, \tau_{n}\}$).

\begin{figure}
\begin{center}
\includegraphics[width=.8\linewidth]{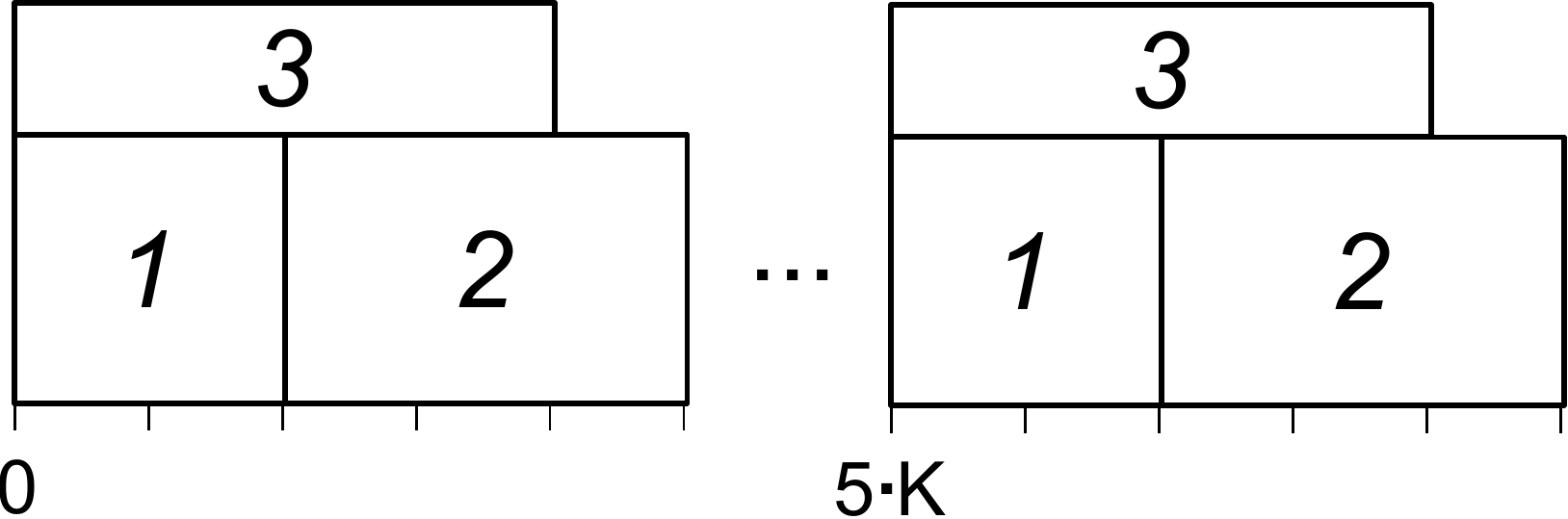}
\caption{\label{fig:gangftp}Gang FTP schedule with priority inversion at time $0$.} 
\end{center}\end{figure}

\begin{definition}[Schedule $\sigma(t)$]\label{def:sched}
  For any set of jobs $J \equals \{J_{1}, J_{2}, J_{3},\ldots\}$ and any set of $m$ identical processors $\{\pi_1, \ldots, \pi_m \}$ we define the \emph{schedule} $\sigma(t)$ of system $\tau $ at time-instant $t$ as
  $\sigma : \mathbb{N} \rightarrow \mathbb{N}^m$ where $\sigma(t) \equals (\sigma_1(t),
  \sigma_2(t), \ldots,
  \sigma_m(t))$ with \\
  $\sigma_j(t) \equals \left\{
\begin{array}{ll}
0, & \text{if there is no job scheduled on } \pi_j \\
& \text{at time-instant } t; \\
i, & \text{if job } J_i \mbox{ is scheduled on } \pi_j \text{ at}\\
& \text{time-instant } t.  
\end{array}
\right.$
\end{definition}
\begin{definition}[Availability of the processors]\label{def:avaiJob}
For any ordered set of jobs $J$ and any set of $m$ processors $\{\pi_1, \ldots, \pi_m \}$, we define the {\em availability of the processors} $A(J,t)$ of the set of jobs $J$ at time-instant $t$ as the set of available processors: $A(J,t) \equals \{j \mid \mbox{ } \sigma_j(t)=0 \}$, where $\sigma$ is the schedule of $J$.
\end{definition}

\begin{definition}[Active, Ready and Running jobs]
A job is said to be \emph{active} if it has been released, but is not finished yet. An active job is \emph{ready} if it is not currently served ; an active job is \emph{running} otherwise. 
\end{definition}

\section{Predictability of Gang Scheduling}\label{sec:predict}

We consider the scheduling of sets of job $J \equals J_{1}, J_{2}, J_{3}\ldots$, (finite or infinite set of jobs) and without loss of generality we consider jobs in a decreasing order of priorities $(J_1 > J_2 > J_{3} > \cdots$). We suppose that the execution time of each job $J_i$ can be any value in the interval $[e_i^{-}, e_i^{+}]$ and we denote by $J^{+}_i$ the job defined as $J^{+}_i \equals (r_i,v_{i},e_i^{+},d_i)$. We denote by $J^{(i)}$ the set of the first $i$ higher priority jobs. We denote also by $J^{(i)}_{-}$ the set $\{ J^{-}_1, \ldots, J^{-}_i \}$ and by $J^{(i)}_{+}$ the set $\{J^{+}_1, \ldots, J^{+}_i \}$. Let $S(J)$ be the time-instant at which the lowest priority job of $J$ begins its execution in the schedule. Similarly, let $F(J)$ be the time-instant at which the lowest priority job of $J$ completes its execution in the schedule.

\begin{definition}[Predictable algorithms]
A scheduling algorithm is said to be {\em predictable} if $S(J^{(i)}_{-}) \leq S(J^{(i)}) \leq S(J^{(i)}_{+})$ and $F(J^{(i)}_{-}) \leq F(J^{(i)}) \leq F(J^{(i)}_{+})$, for all $i \geq 1$ and for all schedulable $J^{(i)}_{+}$ sets of jobs.
\end{definition}

Notice that the predictability of an algorithm implies that any system schedulable when all tasks use their worst case execution time is also schedulable when a task takes less time than expected. We then just need to show that the system is schedulable in the worst case to prove that the system is schedulable in all scenarios.

In previous work~\cite{Cucu-Grosjean2009Predictability-} we proved, for a quite general model, i.e., FJP priority schedulers on unrelated multiprocessors, the predictability for \emph{sequential} jobs (i.e., $v_{i}=1$ for any $J_{i}$). Unfortunately that property is not satisfied for parallel jobs.

\begin{lemma}\label{lem:notPredict}
Gang FJP schedulers are not predictable on multiprocessors.
\end{lemma}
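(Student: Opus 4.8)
The plan is to exhibit a small concrete counterexample: a set of parallel jobs scheduled by Gang FJP in which shortening one job's execution time actually *delays* the start (and/or completion) of a lower-priority job, thereby violating $S(J^{(i)}_{-}) \le S(J^{(i)})$ or $F(J^{(i)}_{-}) \le F(J^{(i)})$. The mechanism I would exploit is exactly the \emph{priority inversion} phenomenon highlighted just before Figure~\ref{fig:gangftp}: because a wide high-priority job can be blocked waiting for enough processors while a narrow lower-priority job runs in the gap, the ``availability pattern'' $A(J,t)$ is not monotone in the execution times. If a high-priority wide job finishes earlier, it may free up processors at a moment that lets an even-higher-priority wide job start, which in turn occupies processors that a still-lower-priority job would otherwise have used — pushing that low-priority job's start time later.

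Concretely, I would work on $m=2$ or $m=3$ processors with three or four jobs released at time $0$ and with one ``elastic'' job whose execution time ranges in $[e^-,e^+]$. First I would fix all the widths $v_i$ and the worst-case lengths $e_i^+$, draw the Gang FJP schedule of $J^{(i)}_{+}$, and read off $S(J^{(i)}_{+})$ and $F(J^{(i)}_{+})$ for the lowest-priority job. Then I would replace the elastic job by its minimum-length version (all other jobs unchanged), redraw the Gang FJP schedule, and read off $S(J^{(i)}_{-})$ and $F(J^{(i)}_{-})$. The target is to make the second schedule start the lowest-priority job strictly later, so that $S(J^{(i)}_{-}) > S(J^{(i)}_{+}) \ge S(J^{(i)})$, contradicting predictability. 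A natural skeleton (to be tuned so that all deadlines are met in the worst case, as the definition of predictability requires) is: $J_1$ wide (needs both processors), $J_2$ wide, $J_3$ narrow, $J_4$ narrow, with $J_2$ the elastic one; when $J_2$ is long, $J_3$ and $J_4$ slip into processor gaps early, but when $J_2$ is short, $J_1$'s and the reshuffled wide jobs' placement blocks those gaps and $J_4$ (the lowest priority) starts later.

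The main obstacle is not the idea but the bookkeeping: I must choose numerical parameters so that (a) the worst-case instance $J^{(i)}_{+}$ is genuinely schedulable (every job meets its deadline), since the predictability definition only quantifies over schedulable worst-case sets, and (b) the Gang FJP tie-breaking and the ``first $v_i$ available processors'' rule really do produce the non-monotone behaviour I want, with no accidental slack that absorbs the perturbation. I would therefore keep the example minimal, verify the two schedules tick by tick using Definitions~\ref{def:sched} and~\ref{def:avaiJob}, and present them as two small timing diagrams (or explicit $\sigma(t)$ tables) so the inequality $S(J^{(i)}_{-}) > S(J^{(i)}_{+})$ — and hence the failure of predictability — can be checked by inspection.
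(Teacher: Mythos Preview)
Your overall strategy --- exhibit a concrete job set on a small platform where shortening one job's execution delays a lower-priority job, via the priority-inversion mechanism --- is exactly what the paper does. However, the skeleton you sketch will not produce the effect: if $J_1$ and $J_2$ are both wide (need all processors), there are no gaps for $J_3,J_4$ to ``slip into'' in the first place, and shortening the elastic wide job simply shifts everything earlier, which is monotone. The mechanism needs the opposite width pattern: the \emph{highest}-priority job must be \emph{narrow} and be the elastic one, a \emph{middle}-priority job must be \emph{wide}, and the victim is a narrow lowest-priority job. In the worst case the wide middle job is blocked (not enough free processors) while the narrow top job runs, so the narrow bottom job fills the remaining processor; when the top job finishes early, the wide middle job suddenly fits and preempts the bottom job.

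The paper's instance is minimal in exactly this sense: three jobs on two processors, $J_1=(0,1,3,3)$, $J_2=(0,2,1,4)$, $J_3=(0,1,2,2)$ with $J_1>J_2>J_3$. With $e_1=3$, job $J_3$ runs on the second processor over $[0,2)$ and meets its deadline; with $e_1=1$, $J_2$ grabs both processors at $t=1$, $J_3$ is preempted, and $J_3$ finishes only at time~$3$, missing its deadline. So four jobs are unnecessary, and the elastic job should be the narrow one at the top, not a wide one. Once you flip the widths in your skeleton, your plan goes through verbatim.
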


\begin{figure}
\begin{center}
\includegraphics[width=.8\linewidth]{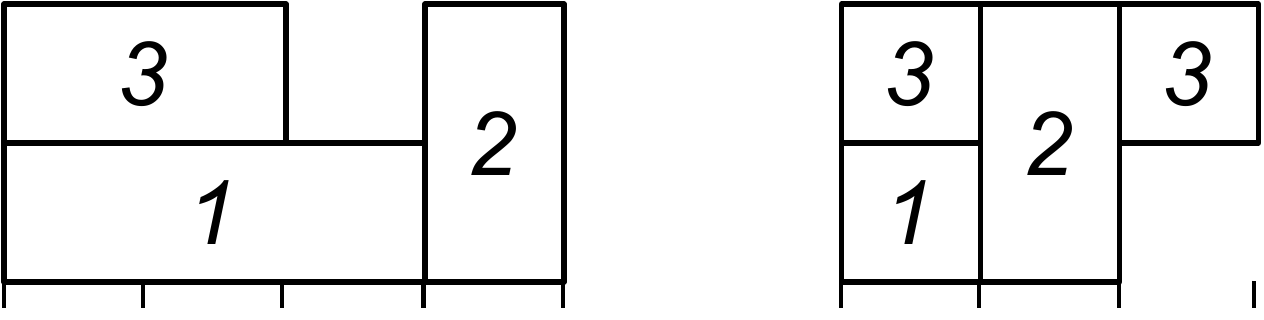}
\caption{\label{fig:notPredictable}Non-predictability of Gang FJP schedulers. $1 > 2 > 3$, and they all arrive at time $0$.} 
\end{center}\end{figure}

\begin{proof}
Here is an example task system, on 2 processors (see Figure~\ref{fig:notPredictable}):
\[ J_{1} = (0, 1, 3, 3), J_{2} = (0, 2, 1, 4), J_{3} = (0, 1, 2, 2)\enspace.\]
Upon two processors and using the priority assignment $J_{1}>J_{2}>J_{3}$, Gang FJP schedules the set of jobs ($J_{3}$ completes at time-instant 2). Unfortunately, if the actual duration of $J_{1}$ is 1, $J_{2}$ will preempt $J_{3}$ at time $t=1$ and $J_{3}$ will complete \emph{later}, at time-instant 3. Then, $J_3$ does not miss its deadline in the ``worst case'' scenario, but misses it if $J_1$ uses less than its worst case execution time $e_1$.
\end{proof}

This negative result implies that neither the DM, RM nor EDF are predictable for Gang scheduling. 

The problem we highlight in this example occurs because some jobs which were not preempted in the worst case scenario are preempted in a scenario with shorter execution times. In other words, by allowing a job $J_i$ taking advantage of some slack time given by another (higher priority) job, $J_i$ interrupts a job $J_j$ (with $J_j<J_i$) which would not have been suspended if we did not have any slack.

If, as we will do and prove in this work, we find a way to avoid the priority inversion phenomenon, we will never have those problematic preemptions. Indeed, if no lower priority job $J_j$ is authorized to start between the arrival of $J_i$ ($J_j<J_i$) and its start time, then if $J_i$ starts anywhere between its arrival time, and its start time in the worst case scenario, it will not interrupt tasks that it would not have interrupt in the worst case.

The ``problematic preemptions'' can be avoided by (at least) three ways:
\begin{itemize}
\item By avoiding the priority inversion;
\item By avoiding any slack (or by not using it);
\item By using the slack, but in a ``smart'' way.
\end{itemize}

In order to obtain a predictable system, we identify two ways of modifying the system:
\begin{itemize}
\item First, we will propose to constraint the priority assignment. We introduce the \emph{Parallelism Monotonic FTP assignment}, and prove the predictability of these priority assignments;
\item Second, we will propose three variants of the scheduler, giving a predictable behavior. Those variants are the \emph{idling scheduler} (not using the slack), the \emph{limited Gang FJP scheduler} (avoiding priority inversion), and the \emph{Gang FJP scheduler with limited slack reclaiming} (smartly using the slack).
\end{itemize}

\subsection{Parallelism Monotonic FTP Assignment}
In this section we will consider a sub-class of Gang FTP assignments which are predictable.
\begin{definition}[Parallelism Monotonic]
An FTP priority assignment is \emph{Parallelism Monotonic} iff $i<j \Rightarrow v_{i} \leq v_{j}$. 
\end{definition}

Notice that this class is very interesting from the theoretical point of view, but might be not a good choice for some implementation. Indeed, it gives a low priority to highly parallel jobs, which makes them more difficult to schedule. In general, it might be useful to first schedule the very parallel jobs, and then to fill the available processors with smaller jobs.

We will now prove that any Parallelism Monotonic assignment are predictable.

In~\cite{Cucu-Grosjean2009Predictability-} we showed that $A(J^{(i)}_{+},t) \subseteq A(J^{(i)},t)$, for all $t$ and all $i$. In other words, that at any time-instant the processors available in ${\sigma^{(i)}_{+}}$ are also available in ${\sigma^{(i)}}$. The counterexample used in the proof of Lemma~\ref{lem:notPredict} violates that property as well. In the following we will consider another kind of processors availability.

\begin{definition}[Level-($i$) availability of the processors]
For any ordered set of $i-1$ jobs $J=J_{1}, \ldots,J_{i-1}$ and any set of $m$ processors, we define the \emph{level-($i$) availability of the processors} $A_{i}(J,t)$ of the set of jobs $J$ at time-instant $t$ as follows 
\[ A_{i}(J,t) \equals \begin{cases} 
	\#A(J,t) & \text{if $\#A(J,t)\geq v_{i}$};\\
 	0 & \text{otherwise}.
\end{cases}
\]
\end{definition}

Informally speaking, $A_{i}(J,t)$ is the the number of available processors if the latter is sufficient to schedule $J_{i}$, otherwise $A_{i}(J,t)$ is null.  

\begin{lemma}\label{lem:AParMon} 
For any schedulable ordered set of jobs $J$, using a Gang FJP Parallelism Monotonic on $m$  processors, we have $A_{i}(J^{(i-1)}_{+},t) \leq A_{i}(J^{(i-1)},t)$, for all $t$ and all $i$. (We consider that the sets of jobs are ordered in the same decreasing order of the priorities, i.e., $J_1 > J_2 > \cdots > J_{\ell}$ and $J_1^{+} > J_2^{+} > \cdots > J_{\ell}^{+}$.)
\end{lemma}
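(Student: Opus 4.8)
The plan is to prove the inequality $A_{i}(J^{(i-1)}_{+},t) \le A_{i}(J^{(i-1)},t)$ by induction on $t$, and simultaneously (or as a corollary) to re-establish the needed ordering facts about start and completion times of the individual jobs $J_1, \dots, J_{i-1}$ in the two schedules $\sigma^{(i-1)}$ and $\sigma^{(i-1)}_{+}$. The key structural observation I would exploit is the Parallelism Monotonic hypothesis: since $v_1 \le v_2 \le \cdots \le v_{i-1} \le v_i$, whenever a chunk of processors is large enough to run $J_i$ it is \emph{a fortiori} large enough to run any of $J_1,\dots,J_{i-1}$; this is precisely what prevents the ``priority inversion'' pathology that broke Lemma~\ref{lem:notPredict}, because a lower-priority job among $J_1,\dots,J_{i-1}$ can never sneak onto a block of processors that a waiting higher-priority job could not itself have used.

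First I would set up the induction. At $t=0$ (or the first release time) the claim is immediate since no job has executed. For the inductive step, assume the level-($i$) availability inequality holds for all $t' < t$, together with the induction hypothesis that every job $J_k$ ($k<i$) has, in the $\sigma^{(i-1)}_{+}$ schedule, started no earlier and finished no earlier than in $\sigma^{(i-1)}$, and has at every instant completed no more work in $\sigma^{(i-1)}_{+}$ than in $\sigma^{(i-1)}$ (the standard ``remaining workload is larger in the $+$ schedule'' invariant, exactly as in~\cite{Cucu-Grosjean2009Predictability-}). I would then compare, at time $t$, which of $J_1,\dots,J_{i-1}$ are \emph{ready} (active and not yet running) in each schedule, process them in priority order, and show that the Gang allocation rule applied to $\sigma^{(i-1)}_{+}$ leaves at least as restrictive a residue as in $\sigma^{(i-1)}$ — i.e. fewer free processors, and in particular if the $\sigma^{(i-1)}$ residue is too small to host $J_i$ then so is the $\sigma^{(i-1)}_{+}$ residue. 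The Parallelism Monotonic property enters here: when a set of jobs is ``pending'' and the free block has size $< v_i$, every pending job among the first $i-1$ has required size $\le v_{i}$, but more importantly the allocation of processors to $J_1,\dots,J_{i-1}$ proceeds identically in shape in both schedules up to the point where the $+$ schedule has strictly more work left, so the $+$ schedule can only keep \emph{more} processors busy, never fewer.

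The main obstacle I anticipate is the bookkeeping for the case analysis on \emph{which} jobs are running versus ready at time $t$ in the two schedules: a job $J_k$ that has already completed in $\sigma^{(i-1)}$ may still be active in $\sigma^{(i-1)}_{+}$ (using its longer execution time), and one must argue that the extra processors it occupies in the $+$ schedule do not somehow \emph{increase} availability there — which is exactly the content of the inequality, so the argument must be threaded carefully to avoid circularity. I would handle this by strengthening the induction hypothesis to carry the pairwise start/finish ordering for \emph{all} of $J_1,\dots,J_{i-1}$ (not just availability), proving the two statements in tandem: availability at time $t$ from the hypothesis up to $t$, then the start/finish ordering for the next scheduling event from the availability. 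A secondary subtlety is that ``$J$ schedulable'' is assumed for $J^{(i-1)}_{+}$ but the definition of predictable quantifies over schedulable $J^{(i)}_{+}$; I would note that schedulability of the larger set implies that of every prefix, so the hypothesis is available at each level of the induction on $i$. Once the level-($i$) availability inequality is in hand, the desired $S$ and $F$ inequalities for $J_i$ itself follow because $J_i$ can start in a schedule exactly at the first instant $t \ge r_i$ with $A_i(\cdot,t) > 0$, and then proceeds under the same monotone-remaining-work comparison.
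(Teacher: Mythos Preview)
Your overall plan (carry a ``remaining-work'' invariant and induct on time) is different from the paper's proof, which inducts on the job index: the paper assumes $A_k(J^{(k-1)}_+,t)\le A_k(J^{(k-1)},t)$ for all $k\le i+1$ and all $t$, adds the single job $J_{i+1}$, and establishes the inequality at level $i+2$ by a three-case split on whether $A_{i+1}(J^{(i)}_+,t)$ and $A_{i+1}(J^{(i)},t)$ are zero or not. Your time-induction can be made to work, but the specific mechanism you rely on is wrong.

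The gap is your claim that ``the $+$ schedule can only keep \emph{more} processors busy, never fewer'', i.e.\ that $\#A(J^{(i-1)}_+,t)\le \#A(J^{(i-1)},t)$. This is false even under Parallelism Monotonic. Take $m=5$, $v_1=1$, $v_2=2$, $v_3=3$, all jobs released at $0$, with $e_1=1$, $e_1^+=2$, and $e_2=e_2^+$, $e_3=e_3^+$ large. At $t=1$: in $\sigma^{(3)}$ job $J_1$ is done, so $J_2$ and $J_3$ both fit and occupy all $5$ processors ($\#A=0$); in $\sigma^{(3)}_+$ job $J_1$ still runs, $J_2$ fits, but $J_3$ does not, so only $3$ processors are busy ($\#A=2$). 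Thus raw availability is \emph{larger} in the $+$ schedule. The lemma's level-$i$ inequality still holds here (both sides are $0$ since $v_4\ge 3>2$), but it does not follow from the raw busyness comparison you invoke.

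What does survive, and what you would need instead, is the implication ``if $J_k$ runs in $\sigma^{(i-1)}_+$ at $t$ and is still active in $\sigma^{(i-1)}$, then $J_k$ runs in $\sigma^{(i-1)}$ at $t$''. Under Parallelism Monotonic, if $J_k$ fits in $\sigma_+$ then \emph{every} active higher-priority job fits there too (since their widths are no larger), hence their total width is at most $m-v_k$; in $\sigma$ the active higher-priority jobs are a subset, so they also fit and leave at least $v_k$ free. That argument sustains your remaining-work invariant and then yields the level-$i$ inequality directly---but note it is essentially the paper's case analysis collapsed across all levels, not the monotone-busyness heuristic you sketched. The paper's job-by-job induction isolates exactly this step: in its cases~(1) and~(2) the Parallelism Monotonic hypothesis $v_{i+2}\ge v_{i+1}$ forces $A_{i+2}(J^{(i+1)}_+,t)=0$ whenever $J_{i+1}^+$ cannot run, which is precisely where the raw-availability comparison would fail.
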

\begin{proof}
  The proof is made by induction on $\ell$ (the number of jobs).  Our inductive hypothesis is the following: $A_{k}(J^{(k-1)}_{+},t) \leq A_{k}(J^{(k-1)},t)$, for all $t$ and $1 < k \leq i+1$.
 
  The property is true in the base case since $A_{2}(J^{(1)}_{+},t) \leq A_{2}(J^{(1)},t)$, for all $t$. Indeed, $S(J^{(1)}) = S(J^{(1)}_{+})$. Moreover $J_{1}$ and $J_{1}^{+}$ are both scheduled on the (same) first $v_{1}$ processors, but $J_{1}^{+}$ will be executed for the same or a greater amount of time than $J_{1}$.

  We will show now that $A_{i+2}(J^{(i+1)}_{+},t) \leq A_{i+2}(J^{(i+1)},t)$, for all $t$.

  Since the jobs in $J^{(i)}$ have higher priority than $J_{i+1}$, then the scheduling of $J_{i+1}$ will not interfere with higher priority jobs which have already been scheduled. Similarly, $J^{+}_{i+1}$ will not interfere with higher priority jobs of $J^{(i)}_{+}$ which have already been scheduled. Therefore, we may build the schedule $\sigma^{(i+1)}$ from $\sigma^{(i)}$, such that the jobs $J_1, J_2, \ldots, J_{i}$, are scheduled at the very same instants and on the very same processors as they were in $\sigma^{(i)}$. Similarly, we may build $\sigma^{(i+1)}_{+}$ from $\sigma^{(i)}_{+}$.

Note that property is straightforward for time-instants where $J^{(i+1)}$ is not scheduled since the processor availability is not modified and by definition of level-($i+2$) processor availability. 

We will consider time-instant $t$, from $r_{i+1}$ to the completion of $J_{i+1}$ (which is actually not after the completion of $J_{i+1}^{+}$, see below for a proof), we distinguish between three cases:
\begin{enumerate}

\item $A_{i+1}(J^{(i)}_{+},t)  = A_{i+1}(J^{(i)},t) = 0$: in both situations no enough processors are available for $J_{(i)}$ (and $J_{(i)}^{+}$). Therefore, both jobs, $J_{i+1}$ and $J_{i+1}^{+}$, do not progress and we obtain $A_{i+2}(J^{(i+1)}_{+},t) = A_{i+2}(J^{(i)}_{+},t) = A_{i+2}(J^{(i+1)},t) = A_{i+2}(J^{(i)},t) = 0$, since $v_{i+2} \geq v_{i+1}$. The progression of $J_{i+1}$ is identical to $J_{i+1}^{+}$. 
\item \label{item:add} $0 = A_{i+1}(J^{(i)}_{+},t) < v_{i+1} \leq A_{i+1}(J^{(i)},t)$: $J_{i+1}$ progress on the $v_{i+1}$ first available processors in $A(J^{(i)},t)$ (not available in $A(J^{(i)}_{+},t)$). $J_{i+1}^{+}$ does not progress. $A_{i+2}(J^{(i)}_{+},t) = 0$ since $v_{i+2} \geq v_{i+1}$. The progression of $J_{i+1}$ is strictly larger than $J_{i+1}^{+}$.
\item \label{item:idem} $v_{i+1} \leq A_{i+1}(J^{(i)},t) \leq A_{i+1}(J^{(i)}_{+},t)$, $J_{i+1}$ and $J_{i+1}^{+}$ progress on the same processors. The property follows by induction hypothesis. 
\end{enumerate}
Therefore, we showed that $A_{i+2}(J^{(i+1)}_{+},t) \leq A_{i+2}(J^{(i+1)},t)$, for all $t$, from $r_{i+1}$ to the completion of $J_{i+1}$ and that $J_{i+1}$ does not complete after $J_{i+1}^{+}$. For the time-instant after the completion of $J_{i+1}$ the property is trivially true by induction hypothesis.
\end{proof}

\begin{theorem}\label{thm:predParMon} Gang FJP schedulers are predictable on identical platforms with Parallelism Monotonic priority assignment.
\end{theorem}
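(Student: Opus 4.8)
The plan is to derive Theorem~\ref{thm:predParMon} directly from Lemma~\ref{lem:AParMon} by essentially the same argument that connects ``processor availability'' to ``start/completion times'' in the sequential case of~\cite{Cucu-Grosjean2009Predictability-}, but now phrased in terms of the level-($i$) availability $A_i$. Recall that predictability requires $S(J^{(i)}_{-}) \leq S(J^{(i)}) \leq S(J^{(i)}_{+})$ and $F(J^{(i)}_{-}) \leq F(J^{(i)}) \leq F(J^{(i)}_{+})$ for all $i$ and all schedulable $J^{(i)}_{+}$. I would first observe that it suffices to prove the two right-hand inequalities $S(J^{(i)}) \leq S(J^{(i)}_{+})$ and $F(J^{(i)}) \leq F(J^{(i)}_{+})$: the left-hand inequalities follow by applying the right-hand ones with the roles of the ``nominal'' and ``$+$'' systems reinterpreted, since $J^{(i)}_{-}$ relates to $J^{(i)}$ exactly as $J^{(i)}$ relates to $J^{(i)}_{+}$ (each execution time in the former lies in $[e^{-},\,\text{actual}]$, so the former plays the role of the shorter system). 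Thus the whole theorem reduces to: for any schedulable $J^{(i)}_{+}$, the lowest-priority job $J_i$ starts no later and finishes no later in $\sigma^{(i)}$ than $J_i^{+}$ does in $\sigma^{(i)}_{+}$.

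Next I would make precise that $S$ and $F$ for the job $J_i$ of priority level $i$ are determined by the level-($i$) availability function of the higher-priority jobs. Concretely, $J_i$ is running at time $t$ precisely when it is still active and $A_i(J^{(i-1)}, t) \neq 0$ (i.e., at least $v_i$ processors are free under the higher-priority schedule); $S(J^{(i)})$ is the first such instant at or after $r_i$, and $F(J^{(i)})$ is $S$ plus the time needed to accumulate $e_i$ units of progress, each unit of progress occurring exactly at the instants where $A_i(J^{(i-1)},t) \neq 0$. The same description holds for the ``$+$'' system with $A_i(J^{(i-1)}_{+}, t)$ and $e_i^{+} \geq e_i$. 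Lemma~\ref{lem:AParMon} gives $A_i(J^{(i-1)}_{+}, t) \leq A_i(J^{(i-1)}, t)$ for all $t$; since these quantities are either $0$ or a count $\geq v_i$, the inequality is equivalent to the set-inclusion statement ``whenever $J_i^{+}$ can progress at $t$, so can $J_i$.'' From this, a straightforward induction on time (or a direct pointwise comparison of the cumulative-progress functions) shows that the progress of $J_i$ dominates that of $J_i^{+}$ at every instant: hence $J_i$ starts no later ($S(J^{(i)}) \leq S(J^{(i)}_{+})$) and, since it also needs no more work ($e_i \leq e_i^{+}$) and never progresses slower, finishes no later ($F(J^{(i)}) \leq F(J^{(i)}_{+})$).

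One subtlety I would be careful about: Lemma~\ref{lem:AParMon} as stated ranges over $i$ and compares $A_i(J^{(i-1)}_{+},\cdot)$ with $A_i(J^{(i-1)},\cdot)$, which is exactly the higher-priority availability seen by job $J_i$ — so it applies verbatim with no re-indexing. I would also note that schedulability of $J^{(i)}_{+}$ is what licenses the use of the lemma (its hypothesis) and also guarantees that the start/finish instants we manipulate are finite and lie within the relevant deadlines, so all the ``first time-instant such that \dots'' definitions are well-posed. The main obstacle — really the only place where anything non-trivial happens — is the step that turns the single-instant availability inequality of Lemma~\ref{lem:AParMon} into the temporal ordering of $S$ and $F$; this is the analogue of the argument in~\cite{Cucu-Grosjean2009Predictability-} and amounts to checking that ``at every instant $J_i$ has progressed at least as much as $J_i^{+}$'' is preserved, given that $J_i$ can progress whenever $J_i^{+}$ can and $J_i$ has a smaller (or equal) total requirement. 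Once that monotonicity of cumulative progress is established, both inequalities for the lowest-priority job drop out immediately, and the reduction in the first paragraph upgrades this to full predictability for all levels $i$.
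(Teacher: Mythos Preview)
Your proposal is correct and follows essentially the same approach as the paper: both derive the predictability inequalities from Lemma~\ref{lem:AParMon} by observing that the level-$i$ availability comparison implies $J_i$ can progress at every instant where $J_i^{+}$ can, hence starts no later and finishes no later. The only cosmetic difference is that the paper points back into the proof of Lemma~\ref{lem:AParMon} (extracting the facts that $J_{i+1}$ completes no later than $J_{i+1}^{+}$ and runs at a superset of the instants where $J_{i+1}^{+}$ runs), whereas you re-derive the same progress-domination directly from the lemma's statement; the underlying argument is identical.
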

\begin{proof}
In the framework of the proof of Lemma~\ref{lem:AParMon} we actually showed extra properties which imply that Gang FJP Parallelism Monotonic schedulers are predictable on identical platforms: (i) $J_{i+1}$ completes not after $J_{i+1}^{+}$ and (ii) $J_{i+1}$ can be scheduled either at the very same instants as $J_{i+1}^{+}$ or may progress during \emph{additional} time-instants (case~(\ref{item:add}) of the proof) these instants may precede the time-instant where $J_{i+1}^{+}$ commences its execution.
\end{proof}

\subsection{Idling Scheduler}

Instead of giving constraints on the priority assignment, we can also adapt our scheduler in order to make it predictable. A first way of doing that is to force tasks to run \emph{exactly} up to their worst case. If a task does not use its worst case, then the scheduler idles the processor(s) up to the expected end time.

\begin{definition}[Idling scheduler]
An \emph{idling scheduler} idles any processor that was used by a task which finished earlier than its worst case, up to the time the processor would have been released in the worst case scenario.
\end{definition}

\begin{lemma}\label{lemma:predIdling} Gang FJP Idling schedulers are predictable on identical platforms.
\end{lemma}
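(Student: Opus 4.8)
The plan is to mimic the structure of the proof of Lemma~\ref{lem:AParMon} and Theorem~\ref{thm:predParMon}, but to observe that the idling modification restores the original availability-inclusion property $A(J^{(i)}_{+},t) \subseteq A(J^{(i)},t)$ that held in the sequential case in~\cite{Cucu-Grosjean2009Predictability-}, rather than merely the weaker level-$(i)$ inequality. The key intuition is: in an idling scheduler, a processor that is occupied by $J^{+}_k$ in $\sigma^{(i)}_{+}$ is always also ``occupied'' in $\sigma^{(i)}$ — either really running $J_k$, or being deliberately idled by the scheduler precisely up to the instant $J^{+}_k$ would finish. Hence the set of genuinely available processors in the actual schedule is a superset of those available in the worst-case schedule, which is exactly what makes the problematic preemptions of Lemma~\ref{lem:notPredict} impossible: no lower-priority job can ever start earlier (or on processors it could not have used) in the actual run than in the worst-case run.

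First I would set up the induction on $\ell$, the number of jobs, with inductive hypothesis that for all $t$ and all $k\le i$ the schedules $\sigma^{(k)}$ (actual) and $\sigma^{(k)}_{+}$ (worst-case), both built by the idling scheduler, satisfy $A(J^{(k)}_{+},t)\subseteq A(J^{(k)},t)$, and moreover $J_k$ starts no later than $J^{+}_k$ and completes no later than $J^{+}_k$. The base case $\ell=1$ is immediate: $J_1$ and $J^{+}_1$ occupy the same first $v_1$ processors from the same release time; $J_1$ finishes at or before $J^{+}_1$, but the idling scheduler holds those processors until $J^{+}_1$'s completion instant, so the occupied-processor set is identical at every $t$, giving equality (hence inclusion). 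For the inductive step, I would, as in Lemma~\ref{lem:AParMon}, rebuild $\sigma^{(i+1)}$ from $\sigma^{(i)}$ (and $\sigma^{(i+1)}_{+}$ from $\sigma^{(i)}_{+}$) leaving the first $i$ jobs untouched, which is legitimate because lower-priority jobs never interfere with higher-priority ones under Gang FJP. Then I would track $J_{i+1}$ versus $J^{+}_{i+1}$ over time: at each instant, since $A(J^{(i)}_{+},t)\subseteq A(J^{(i)},t)$ by the induction hypothesis, whenever $J^{+}_{i+1}$ can find $v_{i+1}$ free processors so can $J_{i+1}$ (and $J_{i+1}$ can use exactly the same $v_{i+1}$ of them, by choosing them consistently, e.g.\ lowest-indexed). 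So $J_{i+1}$ progresses at every instant $J^{+}_{i+1}$ does, and possibly at additional instants; therefore it starts no later and completes no later. Finally, whenever a processor is occupied by $J^{+}_{i+1}$ in $\sigma^{(i+1)}_{+}$ at time $t$, the same processor is either running $J_{i+1}$ or being idled (up to $J^{+}_{i+1}$'s worst-case completion) in $\sigma^{(i+1)}$ — here is where the idling rule is essential, closing the gap between $J_{i+1}$ finishing early and $J^{+}_{i+1}$ still running — so $A(J^{(i+1)}_{+},t)\subseteq A(J^{(i+1)},t)$, completing the induction. The predictability bounds $S(J^{(i)}_{-})\le S(J^{(i)})\le S(J^{(i)}_{+})$ and $F(J^{(i)}_{-})\le F(J^{(i)})\le F(J^{(i)}_{+})$ follow: the upper bounds are the ``no later than'' claims just established (taking $J^{(i)}_{+}$ as the worst case), and the lower bounds follow by the same argument applied with the roles of the nominal and the $e^{-}$-system swapped (the $e^{-}$-system is ``the actual run'' relative to the nominal one).

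The main obstacle I expect is justifying cleanly that the rebuilding step is consistent in the presence of idling — i.e.\ that the idling decisions made for the first $i$ jobs in $\sigma^{(i)}$ carry over verbatim into $\sigma^{(i+1)}$, so that the availability sets really are unchanged on $\{J_1,\dots,J_i\}$ and one only has to analyze $J_{i+1}$. This needs a precise statement of how an idling Gang FJP scheduler is defined on a prefix of jobs (in particular, that idling reservations belong to the job that created them and are respected level by level), and a short argument that adding the lower-priority job $J_{i+1}$ cannot disturb them — which is just the FTP prefix-independence property already invoked in the paper. A secondary, more cosmetic point is being careful with the tie-breaking ``first $v_i$ available processors'' rule so that $J_{i+1}$ and $J^{+}_{i+1}$ genuinely occupy the same processors in case~\ref{item:idem}-type instants; this is routine given $A(J^{(i)}_{+},t)\subseteq A(J^{(i)},t)$ and the deterministic lowest-index selection.
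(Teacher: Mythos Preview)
Your plan is correct in outcome but takes a far more elaborate route than the paper. The paper's proof is essentially one line: under an idling scheduler, each job (counting its idle padding as part of the job) occupies exactly the same processors at exactly the same instants as in the worst-case run, so the two schedules are \emph{identical}; hence $S(J^{(i)})=S(J^{(i)}_{+})$ and $F(J^{(i)})=F(J^{(i)}_{+})$ (with ``completion'' meaning end of the idle padding), and predictability follows trivially.

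Your approach rebuilds the inductive machinery of Lemma~\ref{lem:AParMon}, which works but is unnecessary here. There is also a wobble worth flagging: you state the inductive hypothesis as the inclusion $A(J^{(i)}_{+},t)\subseteq A(J^{(i)},t)$ and explicitly allow ``possibly additional instants'' where $J_{i+1}$ progresses but $J^{+}_{i+1}$ does not. If such instants genuinely occurred, your key claim --- that whenever a processor is occupied by $J^{+}_{i+1}$ it is also occupied by $J_{i+1}$ or its idle padding --- would \emph{not} follow: $J_{i+1}$ together with its padding accumulates exactly $e^{+}_{i+1}$ units of service, and with extra opportunities it would finish that service strictly earlier than $J^{+}_{i+1}$, freeing processors while $J^{+}_{i+1}$ is still running. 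The reason your argument does not collapse is that under idling the availability sets are in fact \emph{equal} at every level (your base case already gives equality, and equality is preserved, so the ``additional instants'' are vacuous). If you carry out your plan, strengthen the hypothesis to $A(J^{(i)}_{+},t)=A(J^{(i)},t)$; then the inductive step is immediate and what remains is precisely the paper's one-line observation, dressed in the notation of Lemma~\ref{lem:AParMon}.
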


\begin{proof}
The proof is very straightforward in this case: any task starts at the same time in the worst case $J^+$, and in the case where some tasks use less than than their worst case. And if we consider the completion time of a job as the time at which its (possibly empty) idle period finishes, then the end time will be the same in the worst case scenario as in any case. Then,

$$S(J^{(i)}_{-}) \leq S(J^{(i)}) \leq S(J^{(i)}_{+}),$$ 
and 
$$F(J^{(i)}_{-}) \leq F(J^{(i)}) \leq F(J^{(i)}_{+}).$$
\end{proof}

\subsection{Limited Gang FJP Scheduler}

The predictability can also be ensured if we avoid the priority inversion phenomenon reported in Section~\ref{sec:intro}, more precisely by restricting Gang FTP/FJP as follows:

\begin{definition}[Limited Gang FJP scheduler] At each instant, the algorithm schedules jobs on processors as follows: the highest priority (active) job $J_i$ is scheduled on the first $v_{i}$ available processors (if any). The very same rule is then applied to the remaining active jobs on the remaining available processors \emph{only if} $J_{i}$ was scheduled (i.e., if at least $v_{i}$ processors were available).
\end{definition}

We now prove that limited Gang FJP are predictable but first an additional definition.

\begin{definition}[Limited level-($i$) availability of the processors]
For any ordered set of $i-1$ jobs $J=J_{1}, \ldots,J_{i-1}$ and any set of $m$ processors, we define the \emph{limited level-($i$) availability of the processors} $\hat{A}_{i}(J,t)$ of the set of jobs $J$ at time-instant $t$ as follows ($\hat{A}_{0}(J,t)= m$ for all $J,t$):
\[ \hat{A}_{i}(J,t) \equals \begin{cases} 
	0 			& \text{if $\hat{A}_{i-1}(J,t) = 0$};\\
	\#A(J,t) 	& \text{if }\#A(J,t)\geq v_{i} \text{ and } \\
		& \hat{A}_{i-1}(J,t) \neq 0;\\
 	0 & \text{otherwise.}
\end{cases}
\]
\end{definition}

\begin{lemma}\label{lem:ALimGang} 
For any schedulable ordered set of jobs $J$, using a Limited Gang FJP on $m$  processors, we have $\hat{A}_{i}(J^{(i-1)}_{+},t) \leq \hat{A}_{i}(J^{(i-1)},t)$, for all $t$ and all $i$. (We consider that the sets of jobs are ordered in the same decreasing order of the priorities, i.e., $J_1 > J_2 > \cdots$ and $J_1^{+} > J_2^{+} > \cdots$.)
\end{lemma}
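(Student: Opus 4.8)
The plan is to mimic, step for step, the proof of Lemma~\ref{lem:AParMon}, replacing the one place where Parallelism Monotonicity ($v_{i+2}\ge v_{i+1}$) was used by the built-in recursion of $\hat{A}$: here $\hat{A}_{i+1}(\cdot,t)=0$ forces $\hat{A}_{i+2}(\cdot,t)=0$ by definition, which is exactly the fact that the Limited Gang rule refuses to schedule $J_{i+2}$ once some higher-priority job is blocked. Concretely I would do an induction on the number $\ell$ of jobs, with inductive hypothesis $\hat{A}_{k}(J^{(k-1)}_{+},t)\le \hat{A}_{k}(J^{(k-1)},t)$ for all $t$ and all $1<k\le i+1$, and then establish $\hat{A}_{i+2}(J^{(i+1)}_{+},t)\le \hat{A}_{i+2}(J^{(i+1)},t)$ for all $t$. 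The base case is immediate: $\hat{A}_{1}$ depends only on $m$, $v_{1}$ and $\hat{A}_{0}=m$, which coincide for $J$ and $J_{+}$; equivalently, $S(J^{(1)})=S(J^{(1)}_{+})$ and $J_{1},J_{1}^{+}$ occupy the same $v_{1}$ processors, with $J_{1}^{+}$ running at least as long.

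For the inductive step I would first note, exactly as in Lemma~\ref{lem:AParMon}, that since $J_{i+1}$ (resp.\ $J_{i+1}^{+}$) has lower priority than every job of $J^{(i)}$ (resp.\ $J^{(i)}_{+}$), we may take $\sigma^{(i+1)}$ to agree with $\sigma^{(i)}$ on $J_{1},\dots,J_{i}$ and $\sigma^{(i+1)}_{+}$ to agree with $\sigma^{(i)}_{+}$; outside the window from $r_{i+1}$ to the completion of $J_{i+1}$ nothing changes and the claim is trivial. Inside that window I process time-instants in increasing order, maintaining the auxiliary invariant that the cumulative execution received by $J_{i+1}$ is at least that received by $J_{i+1}^{+}$ (so $J_{i+1}$ completes no later than $J_{i+1}^{+}$, which also justifies the window). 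At a fixed $t$ I split on the values of $\hat{A}_{i+1}(J^{(i)}_{+},t)$ and $\hat{A}_{i+1}(J^{(i)},t)$, which encode precisely ``all higher-priority jobs are running and at least $v_{i+1}$ processors remain''. Case (1): both are $0$ — neither job runs, availability is unchanged, and since $\hat{A}_{i+1}=0$ propagates to $\hat{A}_{i+2}=0$ in both systems, both sides equal $0$. Case (2): $\hat{A}_{i+1}(J^{(i)}_{+},t)=0<\hat{A}_{i+1}(J^{(i)},t)$ — only $J_{i+1}$ runs, the left-hand side $\hat{A}_{i+2}(J^{(i+1)}_{+},t)=0$ again by zero-propagation, so the inequality is trivial (and $J_{i+1}$'s progress at $t$ strictly exceeds $J_{i+1}^{+}$'s). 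Case (3): $\hat{A}_{i+1}(J^{(i)}_{+},t)=\hat{A}_{i+1}(J^{(i)},t)\ne 0$ (forced by the inductive hypothesis) — both jobs run, each on $v_{i+1}$ processors, so the available-processor counts drop by $v_{i+1}$ on both sides and stay equal, and since both inner $\hat{A}_{i+1}$'s are nonzero the recursion gives $\hat{A}_{i+2}(J^{(i+1)}_{+},t)=\hat{A}_{i+2}(J^{(i+1)},t)$. In all three cases $\hat{A}_{i+2}(J^{(i+1)}_{+},t)\le\hat{A}_{i+2}(J^{(i+1)},t)$, and after $J_{i+1}$ finishes the claim follows from the inductive hypothesis; as in Lemma~\ref{lem:AParMon}, the same analysis yields the side facts (completion not later; $J_{i+1}$ may only run at \emph{additional} instants relative to $J_{i+1}^{+}$) needed to derive predictability of Limited Gang FJP.

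Two points deserve care. First, $\hat{A}$ constrains only \emph{cardinalities} of available-processor sets (plus the nested non-zero flag), so I never need the processor sets in $\sigma^{(i)}_{+}$ and $\sigma^{(i)}$ to coincide — only $\#A(J^{(i)}_{+},t)\le\#A(J^{(i)},t)$ where it matters; this is what keeps case (3) honest without a processor-identity argument. Second, the one genuinely fiddly step is the bookkeeping around the recursive definition of $\hat{A}_{i}$ when the argument set carries ``one job too many'': one has to unfold $\hat{A}_{i+2}(J^{(i+1)}_{\pm},t)$ and observe that, because $J_{i+1}^{\pm}$ does not run at $t$ in cases (1)–(2), $A(J^{(i+1)}_{\pm},t)=A(J^{(i)}_{\pm},t)$, so the inner $\hat{A}_{i+1}$ is exactly the quantity supplied by the inductive hypothesis. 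I expect this definitional unwinding, rather than any scheduling insight, to be the main obstacle; the scheduling content is essentially that of Lemma~\ref{lem:AParMon}, with the Parallelism-Monotonic inequality replaced throughout by zero-propagation through the $\hat{A}$ recursion.
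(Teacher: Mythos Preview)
Your approach is essentially the paper's: its entire proof is that the result ``follows using a similar reasoning as the proof of Lemma~\ref{lem:AParMon} and the fact that $\hat{A}_{i}(J^{(i-1)},t) \geq \hat{A}_{i+1}(J^{(i-1)},t)$.'' Your zero-propagation observation (if $\hat{A}_{i+1}=0$ then $\hat{A}_{i+2}=0$) is precisely the instance of that monotonicity the case analysis needs, so structurally you and the paper agree; your write-up is simply far more explicit.

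One slip to fix: in Case~(3) you assert $\hat{A}_{i+1}(J^{(i)}_{+},t)=\hat{A}_{i+1}(J^{(i)},t)$ and say this is ``forced by the inductive hypothesis''. The inductive hypothesis only gives $\leq$, and equality can genuinely fail --- for instance, if some higher-priority job has already completed in $J^{(i)}$ but is still running in $J^{(i)}_{+}$ at time $t$, then the two raw counts $\#A(J^{(i)}_{+},t)<\#A(J^{(i)},t)$ can both exceed $v_{i+1}$, so both $\hat{A}_{i+1}$ values are nonzero yet unequal. The repair is immediate and does not change your structure: carry $\leq$ rather than $=$ through Case~(3), note that subtracting $v_{i+1}$ from both availability counts preserves the inequality, and conclude $\hat{A}_{i+2}(J^{(i+1)}_{+},t)\leq\hat{A}_{i+2}(J^{(i+1)},t)$ (not equality). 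This is exactly what happens in Case~3 of Lemma~\ref{lem:AParMon}, which also carries an inequality through the inductive hypothesis rather than an equality.
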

\begin{proof}
The property follows using a similar reasoning as the proof of Lemma~\ref{lem:AParMon} and the fact that $\hat{A}_{i}(J^{(i-1)},t) \geq \hat{A}_{i+1}(J^{(i-1)},t).$
\end{proof}

\begin{theorem}\label{thm:predLimited}
Limited Gang FJP schedulers are predictable on identical platforms.
\end{theorem}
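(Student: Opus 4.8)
The plan is to mirror the structure used to derive Theorem~\ref{thm:predParMon} from Lemma~\ref{lem:AParMon}, but now relying on Lemma~\ref{lem:ALimGang} and its underlying case analysis for the Limited Gang FJP scheduler. Concretely, I would argue that in the course of establishing Lemma~\ref{lem:ALimGang} one actually proves two auxiliary facts about each job $J_{i+1}$ and its stretched counterpart $J_{i+1}^{+}$: first, that $J_{i+1}$ completes no later than $J_{i+1}^{+}$; and second, that whenever $J_{i+1}$ makes progress at some instant $t$ at which $J_{i+1}^{+}$ does not, such an instant must lie strictly before the start time $S(J^{(i+1)}_{+})$ of $J_{i+1}^{+}$ in the worst-case schedule. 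Granting these, predictability follows: $S(J^{(i)}_{+})$ is an upper bound on $S(J^{(i)})$ because $J_i$ can only begin earlier when given spare processors, $F(J^{(i)}_{+})$ bounds $F(J^{(i)})$ by the first auxiliary fact, and the symmetric lower bounds $S(J^{(i)}_{-}) \leq S(J^{(i)})$, $F(J^{(i)}_{-}) \leq F(J^{(i)})$ are obtained by applying the same argument with the roles of the two job sets reversed (treating $J^{(i)}$ as the ``$+$'' system relative to $J^{(i)}_{-}$).

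First I would spell out why the Limited scheduler makes the priority-inversion-free property hold at the level of $\hat{A}_i$. The key observation is the monotonicity $\hat{A}_{i}(J,t) \geq \hat{A}_{i+1}(J,t)$ already noted in the proof of Lemma~\ref{lem:ALimGang}: once fewer than $v_i$ processors are available to $J_i$, the limited rule forbids any lower-priority job from running, so the ``limited availability'' collapses to $0$ and stays $0$ for all lower levels. This is precisely what replaces the Parallelism Monotonic hypothesis $v_{i+1} \leq v_{i+2}$ used in the three-case analysis of Lemma~\ref{lem:AParMon}: there the inequality on the $v$'s guaranteed that when $J_{i+1}$ could not run neither could any job of level $\geq i+2$; here the limited rule guarantees the same thing directly, regardless of the processor counts. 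So I would re-run the three-case split (both blocked; $J_{i+1}$ runs on processors unavailable in the $+$ schedule while $J_{i+1}^{+}$ is blocked; both run on the same processors) verbatim, substituting $\hat{A}$ for $A$ and citing the monotonicity instead of the $v$-inequality.

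Then I would extract the predictability statement. Build $\sigma^{(i+1)}$ from $\sigma^{(i)}$ and $\sigma^{(i+1)}_{+}$ from $\sigma^{(i)}_{+}$ without disturbing the higher-priority jobs, exactly as in the proof of Lemma~\ref{lem:AParMon}. The case analysis shows that at every instant either $J_{i+1}$ and $J_{i+1}^{+}$ progress together, or $J_{i+1}$ progresses alone at an instant where $J_{i+1}^{+}$ is blocked (which, because $J_{i+1}^{+}$ has not yet started there, must precede $S(J^{(i+1)}_{+})$), while $J_{i+1}^{+}$ never progresses at an instant where $J_{i+1}$ is blocked. Summing the progress, $J_{i+1}$ has always received at least as much execution as $J_{i+1}^{+}$ by any given time, so $S(J^{(i+1)}) \leq S(J^{(i+1)}_{+})$ and $F(J^{(i+1)}) \leq F(J^{(i+1)}_{+})$; the lower-bound inequalities come from the same reasoning applied to the pair $(J^{(i)}_{-}, J^{(i)})$, since every hypothesis used about $(J^{(i)}, J^{(i)}_{+})$ is equally available for that pair.

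I expect the main obstacle to be bookkeeping rather than conceptual: making precise the claim that an instant where $J_{i+1}$ runs but $J_{i+1}^{+}$ is blocked must lie before $S(J^{(i+1)}_{+})$. One has to rule out the scenario where $J_{i+1}^{+}$ has already started, is temporarily preempted by higher-priority work, and $J_{i+1}$ meanwhile sneaks ahead on extra processors; here the induction hypothesis $\hat{A}_{i+1}(J^{(i)}_{+},t) \leq \hat{A}_{i+1}(J^{(i)},t)$ is what forecloses it, since any processor configuration that blocks $J_{i+1}^{+}$ also blocks $J_{i+1}$ once both have started and are being treated identically. Pinning down that "once both have started they see the same relevant availability" step, and confirming it survives the limited-scheduler modification, is the delicate point; everything else is a transcription of the earlier proofs.
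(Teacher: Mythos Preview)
Your approach is essentially the paper's: derive the theorem from Lemma~\ref{lem:ALimGang} exactly as Theorem~\ref{thm:predParMon} is derived from Lemma~\ref{lem:AParMon}, with the limited-scheduler monotonicity $\hat{A}_i \geq \hat{A}_{i+1}$ standing in for the Parallelism Monotonic hypothesis $v_{i+1}\le v_{i+2}$. The paper's own proof is in fact just a two-line appeal to this parallel.

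One caution: your second auxiliary fact --- that every instant at which $J_{i+1}$ runs but $J_{i+1}^+$ is blocked must lie strictly before $S(J^{(i+1)}_+)$ --- is not true in general (the $+$ job can be preempted \emph{after} it has started while the actual job continues on spare processors), and your attempted resolution reads the inequality the wrong way: $\hat{A}_{i+1}(J^{(i)}_+,t) \le \hat{A}_{i+1}(J^{(i)},t)$ says that blocking $J_{i+1}^+$ does \emph{not} force blocking of $J_{i+1}$. Fortunately the claim is unnecessary. Your cumulative-progress argument already yields both inequalities directly: at every instant where $J_{i+1}^+$ can run, so can $J_{i+1}$ (by the availability inequality), hence $S(J^{(i+1)}) \le S(J^{(i+1)}_+)$ from the first such instant, and $F(J^{(i+1)}) \le F(J^{(i+1)}_+)$ because $J_{i+1}$ always has at least as much accumulated execution and needs no more than $e_{i+1}^+$. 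Drop the second auxiliary fact and the ``delicate point'' disappears; the rest of your write-up is correct and matches the paper's intent.
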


\begin{proof}
The proof is similar to the proof given for Theorem~\ref{thm:predParMon}. $\hat{A}_{i}(J^{(i-1)},t)$ describes the number of processors available to schedule $J_i$. As this number is at any time higher in $J_+^{(i-1)}$ than in $J^{(i-1)}$, then $J_i$ will never start later in  $J^{(i-1)}$ than in $J_+^{(i-1)}$, and will never finish later either.
\end{proof}

Remark that by using limited Gang scheduling we accept to lose efficiency in the resource utilization to ensure system predictability. 
 
\subsection{Gang FJP and Limited Slack Reclaiming}
As highlighted previously, the problem of using the slack caused by a job finishing earlier than expected is that it could cause a preemption that would not have occurred if the job had used its worst case execution time. But with a closer look, we can see that the problem only occurs when a job $J_i$ wider than a job $J_j$ ($v_i>v_j$) takes advantage of the slack created by $J_j$ early completion. A way of avoiding this is to only allow job not larger than the early completed job to use the slack. This is what we propose in this technique.

\begin{definition}[Slack server]
A \emph{slack server} of level $\ell$, width $w$ and length $\lambda$ is a job of priority $\ell$, on $w$ processors, running for $\lambda$ units of time, serving jobs with a priority lower than $\ell$ which do not require more than $w$ processors. If no task are available, the server stays idle until the end of the $\lambda$ units of time.
\end{definition}

It may be noticed that:
\begin{itemize}
\item We do not give any constraint about the way the ``slack server scheduling'' (the way jobs are scheduled inside the slack server) is performed;
\item Within a slack server, we might run several jobs in parallel, as long as they never need more than $w$ processors simultaneously;
\item If a job being served by the slack server becomes eligible by the ``global scheduler'', then it should be interrupted in the slack server, and made available to the global scheduler;
\item All jobs served by the slack server should still stay in the ready state (but not running) from the ``global scheduler'' point of view.
\end{itemize}

Regarding this definition of a slack server, we can now define how our schedule will work.

\begin{definition}[Gang FJP and limited slack reclaiming]
A \emph{Gang FJP scheduler with limited slack reclaiming}, works as follows: At each scheduling point (the completion of a job or an arrival):
\begin{itemize}
\item If this corresponds to the end of a job $J_i$, and this job used $e'<e_i$ units of time, starts a slack server of level $i$, width $v_i$, length $e_i-e'$;
\item Otherwise, the highest priority (active) job $J_i$ is scheduled on the first $v_{i}$ available processors (if any). The very same rule is then applied to the remaining active jobs on the remaining available processors.
\end{itemize}
\end{definition}

We can make an important observation about this scheduling algorithm. Jobs that are run in the slack server will not have any other impact on the global scheduler that reducing the execution time of those jobs. So if we consider the slack server as a black box, the schedule will be exactly the same as if this black box was just idling. The only impact will be the proportion between actual and worst case execution time.

Remark also that will we cause priority inversion: some jobs will be run inside the slack server, while other higher priority ready job (but wider than the slack server) will be kept suspended.

\begin{figure}
\begin{center}
\includegraphics[width=.9\linewidth]{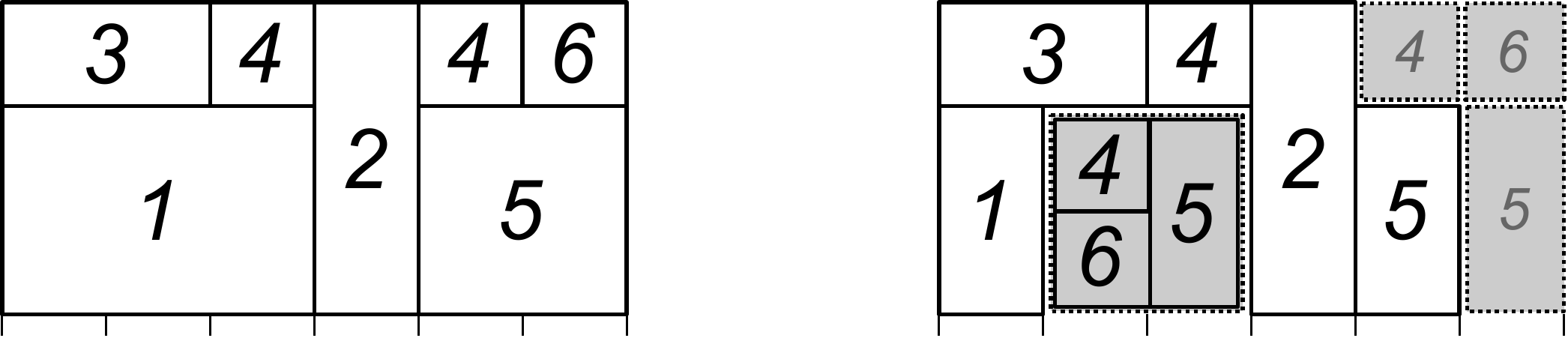}
\caption{\label{fig:slackServer}Slack server example. Left: all jobs use their worst case execution time. Right: Job 1 is shorter than expected, and a slack server is set up (gray part).} 
\end{center}\end{figure}

Figure~\ref{fig:slackServer} illustrates how the slack server works. We consider the following set of jobs (they all have the same arrival time $0$, and the same deadline $6$:
\begin{center}
\begin{tabular}{|c|c|c|c|c|c|c|}
\hline
			& $J_1$		& $J_2$ 	&	$J_3$	& $J_4$		& $J_5$		& $J_6$ \\ \hline
$v_i$ 	& 2        	&  3			& 1 			& 1			& 2			& 1 \\ \hline
$e_i$ 	& 3        	&  1			& 2 			& 2			& 2			& 1 \\ \hline
\end{tabular} 
\end{center}

The left side of Figure~\ref{fig:slackServer} shows the schedule where all jobs use their worst case execution time.
On the right side, $J_1$ finishes at time $1$ (instead of $3$). The schedulers launches then a slack server (in gray on the figure) of level $1$, width $2$ and length $2$, in order to fill the space that would have been used by $J_1$ in the worst case scenario. This server is then scheduled as a job of priority $1$, as was $J_1$. At time $1$, the slack server sees that jobs $J_2$, $J_4$, $J_5$, $J_6$ are \emph{ready} ($J_3$ is running). But $J_2$ is too wide, so the slack server can for instance choose (arbitrarily) to run $J_4$ and $J_6$. After one unit of time, the scheduler sees that it can run $J_4$, the highest priority task which can run on the only available processor released by the end of $J_3$. $J_4$ is then ``preempted'' inside the slack server, and run normally. Then the slack server choses to run $J_5$ ($J_6$ is done). At time $3$, the slack server ends, and $J_5$ is preempted. 

At time $4$ and $5$, we need to start slack servers for $J_4$, $J_5$ and $J_6$, but they do not receive any work to perform.

Remark that if we compare both schedules of Figure~\ref{fig:slackServer}, and see the slack server as part of the concerned job, all tasks start end and at the same time in both scenarios.

\begin{theorem}\label{thm:slack} Gang FJP schedulers with limited slack reclaiming are predictable on identical platforms.
\end{theorem}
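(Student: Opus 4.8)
The plan is to leverage the crucial observation already stated in the text just before the theorem: if we treat each slack server as a black box, then from the global scheduler's point of view the schedule of the ``real'' jobs is \emph{identical} to the schedule in which every slack server is replaced by an idle block of the same level, width and length. In other words, a Gang FJP scheduler with limited slack reclaiming, restricted to the real jobs, behaves exactly like a Gang FJP Idling scheduler. Since Lemma~\ref{lemma:predIdling} already establishes that Gang FJP Idling schedulers are predictable, the start times $S(\cdot)$ and the ``worst-case-aligned'' completion times of all real jobs satisfy the predictability inequalities. The work of the proof is therefore to (i) make the black-box claim precise and justify it, and (ii) transfer the inequalities from the idling-aligned completion times to the \emph{actual} completion times required by the definition of predictability.

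First I would argue the black-box claim by induction on scheduling points (job completions and arrivals), ordered in time. The inductive invariant is: at each scheduling point, the set of available processors seen by the global scheduler, and the remaining-work vector of every not-yet-started real job, coincide between the limited-slack-reclaiming schedule and the corresponding idling schedule. The key point is that a slack server of level $i$ and width $v_i$ occupies exactly the processors and exactly the time interval that $J_i$ would have occupied had it run for its full $e_i$; and any lower-priority job $J_j$ that the slack server serves internally is, by the definition of the slack server, still \emph{ready} (not running) from the global scheduler's perspective, so its availability for the global scheduler is unchanged, and the only effect on $J_j$ is that its remaining execution requirement may have decreased. A decrease in a job's remaining requirement is exactly the situation the predictability definition is designed to handle — it is as if $J_j$ simply had a smaller $e_j$ — so it does not perturb the global schedule structure, only (weakly) advances completion times.

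Next, with the black-box reduction in hand, I would apply Lemma~\ref{lemma:predIdling} to the idling schedule to get $S(J^{(i)}_{-}) \le S(J^{(i)}) \le S(J^{(i)}_{+})$ and the analogous inequalities for the idling-aligned completion times $F$. Then I would observe that the \emph{actual} completion time of $J_i$ in the limited-slack-reclaiming schedule is no later than its idling-aligned completion time (it may finish early and hand its slack to the server), and that running inside a slack server can only further reduce a lower-priority job's actual completion time; combining these monotonicities with the idling inequalities yields $F(J^{(i)}_{-}) \le F(J^{(i)}) \le F(J^{(i)}_{+})$ for the actual completion times as well. The start times are unaffected by the slack servers for exactly the black-box reason, so the $S$ inequalities carry over directly.

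The main obstacle I anticipate is the bookkeeping in step (i): being careful that a job $J_j$ which is partially served inside a slack server and then becomes eligible to the global scheduler (the ``$J_4$ gets preempted inside the server'' situation in Figure~\ref{fig:slackServer}) is correctly accounted for — i.e., that at the moment it leaves the server its remaining work is consistent between the two schedules and that no priority inversion internal to the server ever causes a real job to \emph{start later} or a processor to \emph{become unavailable} to the global scheduler. One must also handle the corner case where the slack server itself spawns while another slack server is active, or where a served job would itself spawn a (nested) server; the cleanest treatment is to insist the invariant is maintained across \emph{all} scheduling points simultaneously and to note that nested servers collapse into the same idle-block picture. Once the invariant is stated at the right granularity, each individual case is routine, and the predictability conclusion drops out of Lemma~\ref{lemma:predIdling} plus two monotonicity observations.
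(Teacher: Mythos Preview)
Your proposal is correct and follows essentially the same approach as the paper: reduce the limited-slack-reclaiming scheduler to the Idling scheduler via the black-box observation, then invoke Lemma~\ref{lemma:predIdling}. The paper's own proof is a two-sentence sketch of exactly this idea (``this behaves exactly the same way as the Idling server \ldots\ instead of being idle, the slack server decreases the actual execution time of some ready (but not running) jobs''), so your version is considerably more rigorous --- in particular your induction on scheduling points and the bookkeeping around jobs partially served inside a server make explicit what the paper leaves implicit.
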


\begin{proof}
From the schedulability point of view, this behaves exactly the same way as the Idling server. But instead of being idle, the slack server decreases the actual execution time of some ready (but not running) jobs.

One job will never preempt a job that would not have been preempted in the worst case scenario.
\end{proof}

Notice that with this kind of scheduler, we might considered the system as being not fully FTP anymore. Some jobs are indeed eligible (enough resource to run them), but are left waiting. In the presentation of this section, we said that we did not give any constraint on the scheduler of the slack server. Indeed, the method we use does not have any impact on the predictability of the system. But we can of course use a FTP scheduling algorithm. This does not make the global system to be strictly FTP, but it makes closer.

Notice also that we present a system with two level of scheduler: one global, and one inside the slack server. This distinction was used for the sake of presentation, but in a real implementation, the global scheduler can of course also do the job of the slack server scheduler.

\section{Periodicity}\label{sec:periodicity}
In this section we prove the periodicity of feasible Gang FTP schedules. It is important to note that we assume in this section that each job of the same task (say $\tau_{i}$) has an execution requirement which is \emph{exactly} $C_{i}$ time units. Thanks to the predictability property this situation corresponds to the worst case.

Remark that, as we consider only the case where all job has its worst case, the schedule of an idling, slack reclaiming or general scheduler is exactly the same.

\begin{theorem}\label{thm:asynPer} For any preemptive (limited or not) Gang FTP algorithm $\mathcal{A}$, if an asynchronous constrained deadline system $\tau$ is $\mathcal{A}$-feasible, then the $\mathcal{A}$-feasible schedule of $\tau$ on $m$ identical processors is periodic with a period $P \equals \lcm\{T_{1}, \ldots, T_{n}\}$ from instant $S_n$ where $S_i$ is defined inductively as follows:

  \begin{itemize}
  \item $S_1 \equals O_1$; 
  \item $S_i \equals \max \{ O_i, O_i+ \left\lceil \dfrac{S_{i-1}-O_i}{T_i}
    \right\rceil T_i \}, \forall i \in \{2,3, \ldots, n \}$.
  \end{itemize}
  
  (Assuming that the execution time of each task is constant.)
\end{theorem}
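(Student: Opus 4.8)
The plan is to prove, by induction on $i$, a statement slightly stronger than what is asked: for every $i$, both the schedule of the sub-system $\tau^{(i)} \equals \{\tau_1,\ldots,\tau_i\}$ \emph{and its configuration} (the set of currently active jobs together with their remaining execution requirements) are periodic with period $P_i \equals \lcm\{T_1,\ldots,T_i\}$ from instant $S_i$. Since $\tau$ is $\mathcal{A}$-feasible no deadline is ever missed, so with constrained deadlines ($D_i\le T_i$) at most one job of each task is active at any instant and the configuration is a well-defined finite object; moreover, because in this section every job is assumed to use exactly its worst-case requirement $C_i$, the idling, limited-slack-reclaiming and plain Gang FTP schedulers produce the very same schedule, so they can be treated uniformly as deterministic machines acting on the configuration. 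Taking $i=n$ then yields the theorem, since $P_n = \lcm\{T_1,\ldots,T_n\} = P$ and the schedule of $\tau^{(n)}$ is the schedule of $\tau$.

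First I would record the elementary facts about the instants $S_i$: each $S_i$ is itself a release instant of $\tau_i$; the sequence is non-decreasing ($S_1\le S_2\le\cdots\le S_n$); $S_i\ge O_j$ for all $j\le i$; and $T_j\mid P_i$ for all $j\le i$, so $S_i+P_i$ is again a release instant of each of $\tau_1,\ldots,\tau_i$ and the arrival stream of $\tau^{(i)}$ restricted to $[S_i,\infty)$ is $P_i$-periodic. The base case $i=1$ is immediate: $\mathcal{A}$-feasibility forces $v_1\le m$, so $\tau_1$, being the highest-priority task, runs without interruption for $C_1$ time units from each of its releases and, since $C_1\le D_1\le T_1$, completes before the next release; hence both the schedule and the configuration of $\{\tau_1\}$ repeat with period $T_1=P_1$ from $O_1=S_1$.

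For the inductive step I would assume the claim for $i-1$. By the FTP property the scheduling of $\tau_1,\ldots,\tau_{i-1}$ inside $\tau^{(i)}$ is not affected by the lower-priority task $\tau_i$ (for the limited variant the cascade can only switch lower-priority tasks off, never higher ones), so it coincides with the standalone schedule of $\tau^{(i-1)}$; in particular the configuration of $\{\tau_1,\ldots,\tau_{i-1}\}$ is $P_{i-1}$-periodic from $S_{i-1}$, hence $P_i$-periodic from $S_{i-1}$ (as $P_{i-1}\mid P_i$), hence the configuration of $\{\tau_1,\ldots,\tau_{i-1}\}$ at $S_i$ equals that at $S_i+P_i$ (both instants lie in $[S_{i-1},\infty)$). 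On the other hand, at $S_i$ exactly one job of $\tau_i$ is active, namely the one just released, with full remaining requirement $C_i$: any earlier job of $\tau_i$ was released no later than $S_i-T_i$ and, by feasibility together with $D_i\le T_i$, has completed by $S_i-T_i+D_i\le S_i$; the identical statement holds at $S_i+P_i$, which is again a release instant of $\tau_i$. Therefore the \emph{full} configuration of $\tau^{(i)}$ at $S_i$ equals the one at $S_i+P_i$. Since the (plain / limited / idling / slack) Gang FTP scheduler restricted to $\tau^{(i)}$ is a deterministic function of the current configuration and the pending arrivals, and the arrival stream of $\tau^{(i)}$ is $P_i$-periodic from $S_i$, a straightforward induction on time starting at $S_i$ shows that both the schedule and the configuration of $\tau^{(i)}$ satisfy $\sigma^{(i)}(t)=\sigma^{(i)}(t+P_i)$ for all $t\ge S_i$, which closes the induction.

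The point requiring the most care — the main obstacle — is this last step: making precise that the scheduler is a deterministic automaton whose entire future evolution is fixed by the current configuration plus the future arrival pattern. Concretely, one must check that for the \emph{limited} variant the ``cascade'' deciding whether $\tau_i$ may run is recomputed at each instant as a function of the configuration of the higher-priority tasks (so it need not be carried as extra state), and that for the idling and limited-slack-reclaiming variants no hidden server state survives once every job uses its worst-case execution time. A secondary subtlety is that the reference instant $S_i$ need not be a release of $\tau_1,\ldots,\tau_{i-1}$, so one cannot argue from a ``clean'' state there; this is precisely why the induction hypothesis has to assert periodicity of the \emph{configuration} of each sub-system, not merely of its schedule.
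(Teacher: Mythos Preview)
Your proposal is correct and follows essentially the same inductive approach as the paper: both argue by induction on $i$ that the schedule of $\tau^{(i)}$ is periodic with period $P_i$ from $S_i$, using the FTP property that $\tau_i$ does not perturb the higher-priority tasks, the fact that $S_i$ is a release instant of $\tau_i$ at which (by feasibility and $D_i\le T_i$) no earlier job of $\tau_i$ survives, and the determinism of the scheduler to propagate the match forward in time. The paper phrases the inductive invariant via the periodicity of the processor-availability function $A(J^{(i)},t)$ rather than your ``configuration'', and is terser about the deterministic-automaton step and the limited/idling variants, but the substance is the same.
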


\begin{proof}
  The proof is made by induction on $n$ (the number of tasks). We denote by $\sigma^{(i)}$ the schedule obtained by considering only the task subset $\tau^{(i)}$, the first higher priority $i$ tasks $\{\tau_1, \ldots, \tau_i \}$, and by $A(J^{(i)}, t)$ the corresponding availability of the processors. Our inductive hypothesis is the following: the schedule $\sigma^{(k)}$ is periodic from $S_k$ with a period $P_k \equals \lcm\{T_{1}, \ldots, T_{k}\}$ for all $1 \leq k \leq i$.

  The property is true in the base case: $\sigma^{(1)}$ is periodic from $S_1=O_1$ with period $P_1$, for $\tau^{(1)}= \{\tau_1 \}$:  since we consider (feasible) constrained deadline systems, at instant $P_1=T_1$ the  previous request of $\tau_1$ has finished its execution and the  schedule repeats.

  We shall now show that any $\mathcal{A}$-feasible schedule of $\tau^{(i+1)}$ is  periodic with period $P_{i+1}$ from $S_{i+1}$.

  Since $\sigma^{(i)}$ is periodic with a period $P_{i}$ from $S_{i}$ the  following equation is verified:

\begin{equation}
  \label{eq:stateInter}
\sigma^{(i)}(t)=\sigma^{(i)}(t+P_i), \forall t \geq S_{i}.
\end{equation}

We denote by $S_{i+1} \equals \max \{ O_{i+1}, O_{i+1}+ \left\lceil \dfrac{S_{i}-O_{i+1}}{T_{i+1}} \right\rceil T_{i+1} \}$ the first request of $\tau_{i+1}$ not before $S_i$.

Since the tasks in $\tau^{(i)}$ have higher priority than $\tau_{i+1}$, then the scheduling of $\tau_{i+1}$ will not interfere with higher priority tasks which are already scheduled. Therefore, we may build $\sigma^{(i+1)}$ from $\sigma^{(i)}$ such that the tasks $\tau_1, \tau_2, \ldots, \tau_i$ are scheduled at the very same instants and on the very same processors as they were in $\sigma^{(i)}$. We apply now the induction step: for all $t \geq S_{i}$ in $\sigma^{(i)}$ we have $A(J^{(i)},t) = A(J^{(i)}, t +P_i)$ the availability of the processors repeats. Notice that at those instants $t$ and $t+P_i$ the available processors (if any) are the same. Consequently at only these instants where $\#A(J^{(i)},t)\ge v_{i+1}$, task $\tau_{i+1}$ {\em may} be executed. Notice that the scheduler can decide to leave one or several processor(s) to be idle intentionally in a deterministic and memoryless way. Notice also, in the ``non limited case'', that $\tau_{i+1}$ might start executing before a higher priority task $\tau_j$ (with $j<i+1$), if $v_j>\#A(J^{(i)},t)>v_{i+1}$. But as soon as $v_j$ processors are available in $A(J^{(i)}, t)$, $\tau_{i+1}$ is preempted (if still running) and the CPU is given to $\tau_j$.  

The instants $t$ with $S_{i+1} \leq t < S_{i+1}+P_{i+1}$, where $\tau_{i+1}$ may be executed in $\sigma^{(i+1)}$, are periodic with period $P_{i+1}$ since $P_{i+1}$ is a multiple of $P_i$.  Moreover since the system is feasible and we consider constrained deadlines, the only active request of $\tau_{i+1}$ at $S_{i+1}$, respectively at $S_{i+1}+P_{i+1}$, is the one activated at $S_{i+1}$, respectively at $S_{i+1}+P_{i+1}$. Consequently, the instants at which the deterministic and memoryless algorithm $\mathcal{A}$ schedules $\tau_{i+1}$ are periodic with period $P_{i+1}$. Therefore, the schedule $\sigma^{(i+1)}$ repeats from $S_{i+1}$ with period equal to $P_{i+1}$ and the property is true for all $1 \leq k \leq n$, in particular for $k=n:$ $\sigma^{(n)}$ is periodic with period equal to $P$ from $S_n$ and the property follows.
\end{proof}

\section{Exact Schedulability Test}\label{sec:exact}
Now we have the material to define an exact schedulability test for rigid and asynchronous  periodic systems.

\begin{corollary}\label{cor:exact-test} For any preemptive Gang FTP predictable algorithm $\mathcal{A}$ (i.e., Parallelism Monotonic, Idling, Limited Gang, and Limited Slack Reclaiming variants) and for any asynchronous rigid constrained deadline system $\tau$ on $m$ identical processors, $\tau$ is $\mathcal{A}$-schedulable if and only if 
\begin{itemize}
\item all deadlines are met in $[0, S_n+P)$ and
\item $\theta(S_{n})=\theta(S_{n}+P)$
\end{itemize}
where $S_i$ are defined inductively in Theorem~\ref{thm:asynPer}.
\end{corollary}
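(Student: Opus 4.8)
The plan is to prove the two directions separately, each time first reducing to the worst-case scenario in which every job of $\tau_i$ executes for exactly $C_i$ time units. This reduction is legitimate precisely because $\mathcal{A}$ is predictable (Theorems~\ref{thm:predParMon}, \ref{thm:predLimited}, \ref{thm:slack} and Lemma~\ref{lemma:predIdling}): a constrained-deadline system is $\mathcal{A}$-schedulable for every admissible choice of execution times if and only if it is $\mathcal{A}$-schedulable when all execution times take their worst-case values $C_i$. Throughout, $\theta(t)$ denotes the configuration of the system at time $t$ --- for each task, whether a request is active and how much execution it has already received --- and I will use that $\mathcal{A}$ is deterministic and memoryless, so that the schedule on $[t,\infty)$ is determined by $\theta(t)$ together with the arrivals occurring at or after $t$. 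On worst-case inputs the Idling and Limited-Slack-Reclaiming variants coincide with the plain Gang FTP scheduler (as noted at the start of Section~\ref{sec:periodicity}), so this applies uniformly to all four algorithms.

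For the ``only if'' direction: if $\tau$ is $\mathcal{A}$-schedulable then in the worst-case schedule $\sigma$ all deadlines are met, in particular on $[0,S_n+P)$; and since $\tau$ is then $\mathcal{A}$-feasible, Theorem~\ref{thm:asynPer} gives that $\sigma$ is periodic from $S_n$ with period $P=\lcm\{T_1,\ldots,T_n\}$. A periodic schedule has a periodic cumulative-execution pattern and, deadlines being constrained, at most one active request per task at every instant past $S_n$; reading this off at $S_n$ and $S_n+P$ yields $\theta(S_n)=\theta(S_n+P)$.

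For the ``if'' direction, suppose both listed conditions hold for the worst-case schedule $\sigma$. First observe $S_n\ge O_i$ for every $i$ (immediate from the inductive definition of the $S_i$), so the arrival pattern on $[S_n,\infty)$ is $P$-periodic: a job arrives at $t\ge S_n$ iff a job of the same task arrives at $t+P$. Then I would argue by induction on $q\ge 0$ that the restriction of $\sigma$ to $[S_n+qP,\,S_n+(q+1)P)$ is the shift by $qP$ of its restriction to $[S_n,\,S_n+P)$ and that $\theta(S_n+qP)=\theta(S_n)$: the induction step uses $\theta(S_n+qP)=\theta(S_n)$, the $P$-periodicity of the future arrivals, and determinism/memorylessness to propagate the pattern for one more period, with $\theta(S_n+(q+1)P)=\theta(S_n+P)=\theta(S_n)$ closing the step. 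Consequently every deadline in $[S_n,\infty)$ is the shift of a deadline in $[S_n,S_n+P)$, which is met by the first condition, and every deadline in $[0,S_n)$ is met by the first condition as well; hence $\sigma$ is deadline-meeting, and by predictability $\tau$ is $\mathcal{A}$-schedulable in every scenario.

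The two ``reduce to the worst case'' invocations and the shift-by-$qP$ bookkeeping are routine. The one point that deserves genuine care is to confirm that $\theta(\cdot)$ really captures everything the scheduler is allowed to depend on --- so that ``$\theta(S_n)=\theta(S_n+P)$ together with $P$-periodic future arrivals'' legitimately forces the schedule to repeat --- which is exactly where the memorylessness hypothesis, and the observation that the Idling/slack-reclaiming variants degenerate to the plain scheduler on worst-case inputs, do the real work. I expect that to be the only (mild) obstacle; everything else is assembly of Theorem~\ref{thm:asynPer} and the predictability results already proved.
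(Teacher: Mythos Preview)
Your proposal is correct and uses exactly the ingredients the paper invokes --- Theorem~\ref{thm:asynPer} for periodicity together with the predictability results (Theorems~\ref{thm:predParMon}, \ref{thm:predLimited}, \ref{thm:slack} and Lemma~\ref{lemma:predIdling}) for the reduction to worst-case executions. The paper's own proof is literally a one-line appeal to these results, so your write-up is substantially more detailed than what appears there: in particular you make explicit the separate treatment of the two directions, the inductive shift-by-$qP$ argument for the ``if'' direction (which the paper leaves entirely implicit), and the role of determinism/memorylessness in propagating $\theta(S_n)=\theta(S_n+P)$ forward --- all of which are genuinely needed but which the paper simply subsumes under ``direct consequence.''
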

\begin{proof}
Corollary~\ref{cor:exact-test} is a direct consequence of Theorem~\ref{thm:asynPer} and the predictability of Parallelism Monotonic (Theorem~\ref{thm:predParMon}), Idling (Lemma~\ref{lemma:predIdling}), Limited Gang (Theorem~\ref{thm:predLimited}), and Limited Slack Reclaiming (Theorem~\ref{thm:slack}) variants.
\end{proof}

\section{Conclusion and Future Work}\label{sec:conclusion}
In this paper we considered the scheduling of periodic and parallel rigid tasks. We provided and proved correct an \emph{exact} schedulability test for Fixed Task Priority (FTP) Gang scheduler sub-classes: Parallelism Monotonic, Idling, Limited Gang, and Limited Slack Reclaiming. Additionally, we studied the predictability of our schedulers: we show that Gang FJP schedulers are not predictable and we identify several sub-classes which are actually predictable. We also extended the definition of rigid, moldable and malleable jobs to recurrent tasks. 

In future work we aim to extend the model by considering \emph{moldable} tasks --- task can be executed in a varying number of processors --- that is the scheduler can determine, on-line, the rectangle of each task instance (job) based upon parallel performance model (e.g., the one defined in \cite{CCG06b}).

\bibliographystyle{acm}
\bibliography{biblio.bib}
\end{document}